\documentclass[a4paper,10pt]{article}
\usepackage{amsmath,amssymb,amsthm,bbm,epsfig,graphics,graphicx,hyperref}
\usepackage{verbatim,braket,cleveref, enumerate}
\usepackage{color}
\usepackage{etoolbox}
\usepackage{authblk}
\newtheorem{theorem}{Theorem}
\newtheorem{lemma}[theorem]{Lemma}
\newtheorem{definition}[theorem]{Definition}
\newtheorem{remark}{Remark}
\newtheorem{example}[theorem]{Example}

\newtheorem{corollary}[theorem]{Corollary}

\newtheorem{subtheorem}{Theorem}[theorem]
\newbool{insubtheorem}
\AtBeginEnvironment{theorem}{\global\boolfalse{insubtheorem}}%
\AtBeginEnvironment{subtheorem}{%
  \ifbool{insubtheorem}{}{\global\booltrue{insubtheorem}%
    \refstepcounter{theorem}}%
}%

\newcommand{\HH}{\ensuremath{\mathcal{H}} }
\newcommand{\ZZ}{\ensuremath{\mathbb{Z}} }
\newcommand{\CC}{\ensuremath{\mathbb{C}} }
\newcommand{\Ccal}{\ensuremath{\mathcal{C}} }
\newcommand{\AME}[1]{AME($#1$)}
\newcommand{\QSS}[2]{$((#1,#2))$ threshold QSS scheme}
\newcommand{\floor}[1]{\ensuremath{\lfloor #1 \rfloor}}
\newcommand{\ceil}[1]{\ensuremath{\lceil #1 \rceil}}

\newcommand{\Id}{\mathbbm{1}}
\newcommand{\vv}[1]{\ensuremath{\mathbf{#1}}}

\DeclareMathOperator{\Tr}{Tr}

\crefname{equation}{Equation}{Equations}
\crefname{definition}{Defintion}{Defintions}
\crefname{item}{}{}

\def\noqed{\renewcommand{\qedsymbol}{}}

\title{Absolutely Maximally Entangled States: Existence and Applications}
\author{Wolfram Helwig}
\author{Wei Cui}
\affil{Center for Quantum Information and Quantum Control (CQIQC),\\
Department of Physics,
University of Toronto, Toronto, Ontario, M5S 1A7, Canada}

\begin{document}

\maketitle

\begin{abstract}
We investigate absolutely maximally entangled (AME) states, which are multipartite quantum states that are maximally entangled with respect to any possible bipartition. These strong entanglement properties make them a powerful resource for a variety of quantum information protocols. In this paper, we show the existence of AME states for any number of parties, given that the dimension of the involved systems is chosen appropriately. We prove the equivalence of AME states shared between an even number of parties and pure state threshold quantum secret sharing (QSS) schemes, and prove necessary and sufficient entanglement properties for a wider class of ramp QSS schemes. We further show how AME states can be used as a valuable resource for open-destination teleportation protocols and to what extend entanglement swapping generalizes to AME states.
\end{abstract}

\section{Introduction}
Entanglement has been a hot topic since the beginning of quantum mechanics and fueled a lot of discussions, among them most notable the Einstein-Podolsky-Rosen (EPR) paradox \cite{Einstein1935}, which finally led Bell to come up with a method of actually measuring entanglement \cite{Bell1964}. It was not until the advent of quantum information, however, that entanglement was recognized as a useful resource. Almost all applications in quantum information make either explicit or implicit use of entanglement, which makes it crucial to gain as much insight as possible. \cite{Horodecki2009}

While the entanglement of bipartite states is already very well understood \cite{Bennett1996a, Nielsen1999, Vidal1999}, the road to its generalization to more than two parties is paved with many obstacles. Therefore we often have to restrict ourselves to special cases when analyzing multipartite entanglement. A prominent choice are states that extremize the entanglement for a certain measure of entanglement. In this paper we want to do that by focusing on \emph{absolutely maximally entangled (AME)} states, which are defined as states that are maximally entangled for any possible bipartition. \cite{Helwig2012, Gisin1998, Gour2010}

\begin{definition}
\label{def:AME}
	An absolutely maximally entangled state is a pure state, shared among $n$ parties $P=\{1,\ldots,n\}$, each having a system of dimension $d$. Hence $\ket{\Phi} \in \HH_1 \otimes \cdots \otimes \HH_n$, where $\HH_i \cong \CC^d$, with the following equivalent properties:
	\begin{enumerate}[(i)]
		\item $\ket{\Phi}$ is maximally entangled for any possible bipartition. This means that for any bipartition of $P$ into disjoint sets $A$ and $B$ with $A\cup B  = P$ and, without loss of generality, $m=|B|\leq |A|=n-m$, the state $\ket{\Phi}$ can be written in the form 
		\begin{equation}
			\label{eq:defAMEstate}
			\ket\Phi = 
			\frac{1}{\sqrt{d^m}}\sum_{k\in \ZZ_d^{m}} 
			\ket{k_1}_{B_1}\cdots \ket{k_{m}}_{B_{m}}
			\ket{\phi(k)}_A,
		\end{equation}
		with $\braket{\phi(k)|\phi(k')} = \delta_{kk'}$.
		\item The reduced density matrix of every subset of parties $A\subset P$ with $|A| = \floor{\frac{n}{2}}$ is totally mixed, $\rho_A = d^{-\floor{\frac{n}{2}}} \Id_{d^{\floor{\frac{n}{2}}}}$.
		\item The reduced density matrix of every subset of parties $A\subset P$ with $|A|\leq \frac{n}{2}$ is totally mixed.
		\item The von Neumann entropy of every subset of parties $A\subset P$ with $|A| = \floor{\frac{n}{2}}$ is maximal, $S(A) = \floor{\frac{n}{2}} \log d$.
		\item The von Neumann entropy of every subset of parties $A\subset P$ with $|A| \leq \frac{n}{2}$ is maximal, $S(A) = |A| \log d$.
\label{def:AMEentropy}
	\end{enumerate}
	These are all necessary and sufficient condition for a state to be absolutely maximally entangled. We denote such a state as an \AME{n,d} state.
\end{definition}

The simplest examples of AME states occur for low dimensional systems shared among few parties. Starting with qubits, the most obvious one is an EPR pair, which is maximally entangled for its only possible bipartition. For three qubits shared among three parties, we can recognize the GHZ state as an AME state. It is maximally entangled, with 1 ebit of entanglement with respect to every bipartition. For four qubits, there is no obvious candidate, and in fact it has been shown that for four qubits no AME state exists \cite{Gour2010}. We can still find an absolutely maximally entangled states for four parties, however, by increasing the dimensions of the involved systems. An \AME{4,3} state for four \emph{qutrits} shared among four parties exists, and it is given by \cite{Helwig2012}
\begin{gather}
	\label{eq:popescuestate}
	\begin{split}
	\ket{\Phi} &= \frac{1}{\sqrt{9}} \sum_{i,j=0}^2 \ket{i} \ket{j} \ket{i+j} \ket{i+2j}.
	\end{split}
\end{gather}
This is the first indicator that the search for AME states gets more promising as we increase the dimensions of the systems.

Completing the characterization of AME states for qubits, it is known that AME states exist for 5 and 6 qubits. Explicit forms for them are given in Ref.~\cite{Helwig2012}, and it turns out that they are closely related to the five-qubit error correction code. For 7 qubits, it is still not known if an AME state exists, whereas for $\geq 8$ qubits, it has been shown that no AME states can exist \cite{Gour2010, Rains1999}.

In Ref.~\cite{Helwig2012}, we showed how AME states can be used for parallel teleportation protocols. In these protocols, the parties are divided into a sets of senders and receivers, respectively. One of the two sets is given the ability to perform joint quantum operations, while players in the other set can only perform local quantum operations. Under these conditions, a parallel teleportation of multiple quantum states is possible if the set that performs joint quantum operations is larger than the other set.
A closer look at these teleportation scenarios then led to the observation that any AME state shared by an even number of parties can be used to construct a threshold quantum secret sharing (QSS) scheme \cite{Cleve1999, Gottesman2000, Imai2005}. The opposite direction was also shown, with one additional condition imposed on the QSS scheme, namely that the shared state that encodes the secret is already an AME state.

In this paper, we will give an information-information theoretic proof of this equivalence of AME states and threshold QSS scheme, which shows that the additional condition is not required. We will rather see that it is satisfied for all threshold QSS schemes. We will further give a recipe of how to construct AME states from classical codes that satisfy the Singleton bound \cite{Singleton1964}. This construction can be used to 
produce AME states for a wide class of parameters, and it even proves that AME states exist for any number of parties for appropriate system dimension. A result that could also be deduced from the equivalence of AME states and QSS schemes and a known construction for threshold QSS schemes \cite{Cleve1999}. We will then show more applications for AME states. The first being the construction of a wider class of QSS schemes, the \emph{ramp} QSS schemes, of which threshold QSS schemes are a special case. The next one is the utilization of AME states as resources for open-destination teleportation protocols \cite{Zhao2004}. Finally, we investigate to what extend entanglement can be swapped between two AME states.

This paper is structured as follows. In Section~\ref{section:mds}, we show how AME states can be constructed from classical codes, which also also shows the existence of AME states for any number of parties. In Section~\ref{section:AME-QSS}, we establish an equivalence between even party AME states 
and threshold QSS schemes, using an information theoretic approach to QSS schemes.
Section~\ref{section:muitisecret} shows how to share multiple secrets using AME states. In Section~\ref{section:openteleport}, we show that AME states can be used for open-destination teleportation. After that, swapping of AME states is investigated in Section~\ref{section:swapping}.

\section{Constructing AME States from Classical MDS Codes}
\label{section:mds}
There is a subclass of AME($n$,$d$) states that can be constructed from optimal classical error correction codes.  A classical code \Ccal consists of $M$ codewords of length $n$ over an alphabet $\Sigma$ of size $d$. For our purposes, the alphabet is going to be $\Sigma=\ZZ_d$ and thus $\Ccal \subset \ZZ_d^n$. The \emph{Hamming distance} between two codewords is defined as the number of positions in which they differ, and the minimal distance $\delta$ of the code \Ccal as the minimal Hamming distance between any two codewords. For a given length $n$ and minimal distance $\delta$, the number of codewords $M$ in the code is bounded by the \emph{Singleton bound} \cite{Singleton1964, MacWilliams1977}
\begin{equation}
 \label{eq:Singleton}
 M \leq d^{n-\delta+1}.
\end{equation}
Codes that satisfy the Singleton bound are referred to as maximum-distance separable (MDS) codes. They can be used to construct AME states:

\begin{subtheorem}
  \label{theorem:MDS}
  From a classical MDS code $\Ccal \subset \ZZ_d^{2m}$ of length $2m$ and minimal distance $\delta = m+1$ over an alphabet $\ZZ_d$, an \AME{2m,d} state can be constructed as
  \begin{align}
    \label{eq:MDSstate}
    \ket{\textrm{AME}}
      &=\frac{1}{\sqrt{d^{m}}}\sum_{c\in \Ccal} \ket{c}\\
	&= \frac{1}{\sqrt{d^{m}}}\sum_{c\in \Ccal}
	\ket{c_1}_1 \cdots \ket{c_m}_m \ket{c_{m+1}}_{m+1} \cdots \ket{c_{2m}}_{2m}.
  \end{align}
\end{subtheorem}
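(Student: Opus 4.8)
The plan is to verify condition~(ii) of \Cref{def:AME}, which for $n=2m$ parties concerns subsets of size $\floor{\frac{2m}{2}}=m$: I will show that for every subset $A$ of the $2m$ parties with $|A|=m$, the reduced density matrix equals $\rho_A=d^{-m}\Id_{d^m}$. Since the five conditions listed in the definition are asserted to be equivalent, establishing this one condition suffices to conclude that $\ket{\textrm{AME}}$ is an \AME{2m,d} state.

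First I would record the combinatorial consequence of the MDS property that drives the whole argument. Because the Singleton bound \eqref{eq:Singleton} is saturated, the code contains exactly $M=d^{2m-\delta+1}=d^{m}$ codewords, so the prefactor $d^{-m/2}$ is the correct normalisation and the computational-basis vectors $\{\ket{c}\}_{c\in\Ccal}$ are orthonormal. The key claim is that the minimal distance $\delta=m+1$ forces the projection of $\Ccal$ onto \emph{any} set $S$ of $m$ coordinates to be a bijection onto $\ZZ_d^m$: if two distinct codewords agreed on all $m$ coordinates of $S$, they could differ in at most $2m-m=m=\delta-1$ positions, contradicting $\delta=m+1$; hence the projection is injective, and comparing the $d^m$ codewords against the $d^m$ possible values upgrades injectivity to a bijection.

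With this lemma in hand the density-matrix computation is short. Fixing a balanced bipartition $P=A\cup B$ with $|A|=|B|=m$, I would write each term of the state as $\ket{c_A}_A\ket{c_B}_B$, where $c_A$ and $c_B$ denote the restrictions of the codeword $c$ to the coordinates in $A$ and $B$, and then trace out $B$:
\begin{equation*}
  \rho_A=\frac{1}{d^{m}}\sum_{c,c'\in\Ccal}\braket{c'_B|c_B}\,\ket{c_A}\bra{c'_A}
        =\frac{1}{d^{m}}\sum_{\substack{c,c'\in\Ccal\\ c_B=c'_B}}\ket{c_A}\bra{c'_A}.
\end{equation*}
Applying the bijection property to the $m$ coordinates of $B$ collapses the constraint $c_B=c'_B$ to $c=c'$, killing all off-diagonal terms, and applying it to the $m$ coordinates of $A$ shows that $c_A$ ranges over all of $\ZZ_d^m$ exactly once as $c$ runs over $\Ccal$. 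This yields $\rho_A=d^{-m}\sum_{a\in\ZZ_d^m}\ket{a}\bra{a}=d^{-m}\Id_{d^m}$, which is precisely condition~(ii).

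The proof carries no serious analytic difficulty, so the one step deserving care is the MDS projection lemma itself: it is invoked twice and in slightly different guises — once as injectivity on $B$, to eliminate the cross terms, and once as surjectivity (indeed bijectivity) on $A$, to make the surviving diagonal uniform — and I would state it once as a standalone claim so that both applications are manifestly justified.
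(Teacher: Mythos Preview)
Your proof is correct and rests on the same combinatorial fact as the paper's: the MDS property forces the projection of $\Ccal$ onto any $m$ coordinates to be a bijection onto $\ZZ_d^m$. The only cosmetic difference is which equivalent condition of \Cref{def:AME} is verified---the paper checks condition~(i) by observing that the bijection on $B$ lets each $\ket{k}_B$ appear exactly once while injectivity on $A$ makes the accompanying $\ket{\phi(k)}_A$ orthonormal, whereas you check condition~(ii) by tracing out $B$ and computing $\rho_A$ explicitly; the underlying argument is the same.
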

\begin{proof}
  The code \Ccal satisfies the Singleton bound, which means the sum contains a total of $M=d^{2m-\delta+1}=d^m$ terms. Furthermore, any two of these terms differ in at least one of the first $m$ kets because the code has minimal distance $\delta = m+1$. Hence the sum contains each possible combination of the first $m$ basis kets exactly once. Moreover, for any two different terms, the last $m$ kets must also differ in at least one ket and are thus orthogonal. This means the state has the form of Equation~\eqref{eq:defAMEstate} with respect to the bipartition into the first $m$ and last $m$ parties. The same argument works for any other bipartition into two sets of size $m$, hence the state is absolutely maximally entangled.
\end{proof}

An analogous argument shows that a similar construction for an odd number of parties results in an AME state.
\begin{subtheorem}
 \label{theorem:MDSodd}
  From a classical MDS code $\Ccal \subset \ZZ_d^{2m+1}$ of length $2m+1$ and minimal distance $\delta = m+2$ over an alphabet $\ZZ_d$, an \AME{2m+1,d} state can be constructed as
  \begin{align}
    \label{eq:MDSstateodd}
    \ket{\textrm{AME}}
      &=\frac{1}{\sqrt{d^{m}}}\sum_{c\in \Ccal} \ket{c}\\
	&= \frac{1}{\sqrt{d^{m}}}\sum_{c\in \Ccal}
	\ket{c_1}_1 \cdots \ket{c_{m+1}}_{m+1} \ket{c_{m+2}}_{m+2} \cdots \ket{c_{2m}}_{2m+1}.
  \end{align}
\end{subtheorem}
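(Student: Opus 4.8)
The plan is to mimic the proof of Theorem~\ref{theorem:MDS} almost verbatim, with the counting adjusted to the unbalanced split forced by the odd number of parties. Since $\floor{\frac{2m+1}{2}} = m$, Definition~\ref{def:AME} shows that it suffices to verify that $\ket{\textrm{AME}}$ can be brought into the form~\eqref{eq:defAMEstate} for every bipartition of the $2m+1$ parties into a set $B$ with $|B| = m$ and its complement $A$ with $|A| = m+1$; if this holds for all such bipartitions, the state is AME.

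First I would count the codewords. By the Singleton bound~\eqref{eq:Singleton} applied with $n = 2m+1$ and $\delta = m+2$, the code saturates at $M = d^{n-\delta+1} = d^{(2m+1)-(m+2)+1} = d^m$, so the sum defining $\ket{\textrm{AME}}$ again has exactly $d^m$ terms and the prefactor $d^{-m/2}$ gives the correct normalization.

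Next, fix a bipartition $(B,A)$ as above and establish the two structural facts. For the size-$m$ block $B$: if two distinct codewords agreed on all $m$ coordinates in $B$, they could differ only among the $m+1$ coordinates in $A$, giving Hamming distance at most $m+1 < m+2 = \delta$, a contradiction. Hence the $d^m$ codewords take pairwise-distinct values on $B$, so each of the $d^m$ basis strings $\ket{k}_B$ with $k \in \ZZ_d^m$ occurs in exactly one term. For the size-$(m+1)$ block $A$: if two distinct codewords agreed on all $m+1$ coordinates in $A$, they could differ only among the $m$ coordinates in $B$, giving Hamming distance at most $m < \delta$, again a contradiction; hence the $A$-restrictions of distinct codewords are distinct computational basis states and therefore orthogonal. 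Writing $\ket{\phi(k)}_A$ for the unique $A$-part of the codeword whose $B$-part is $k$, these two facts say exactly that $\ket{\textrm{AME}} = d^{-m/2}\sum_{k\in\ZZ_d^m} \ket{k_1}_{B_1}\cdots\ket{k_m}_{B_m}\ket{\phi(k)}_A$ with $\braket{\phi(k)|\phi(k')} = \delta_{kk'}$, which is Equation~\eqref{eq:defAMEstate} for this bipartition. As the bipartition was arbitrary, the state is an \AME{2m+1,d} state.

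The only point needing genuine care---and the reason the minimal distance is $m+2$ rather than $m+1$ as in the even case---is that the split is now unbalanced, so I expect this to be the sole ``obstacle.'' One must check that a single value of $\delta$ simultaneously forces distinctness on the smaller block $B$ (which needs $\delta \geq m+2$, the binding constraint) and orthogonality on the larger block $A$ (which needs only $\delta \geq m+1$). Both are guaranteed precisely by $\delta = m+2$, and since $\floor{\frac{2m+1}{2}} = m$ we never have to consider subsets larger than $m$, so no further bipartitions need to be examined.
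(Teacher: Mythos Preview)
Your proof is correct and follows essentially the same approach as the paper: count codewords via the Singleton bound, then use the minimal distance $\delta=m+2$ to show that for any $(m,m{+}1)$-bipartition the restrictions to the size-$m$ block are pairwise distinct (hence exhaust $\ZZ_d^m$) while the restrictions to the size-$(m{+}1)$ block are orthogonal. The paper's proof is a two-line sketch that treats only the canonical split into the first $m{+}1$ and last $m$ coordinates and then invokes ``the same argument as above''; your version spells out the arbitrary bipartition and the asymmetry in the two distance checks, but the underlying idea is identical.
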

\begin{proof}
 The code contains $M=d^{m}$ terms. Each of the terms differ in at least one of the first $m+1$ and last $m$ terms. Thus, with the same argument as above, this is an AME state.
\end{proof}

Trivial states of that form are $d$-dimensional EPR states, which are represented by the code with codewords $00, 11, \ldots, (d-1)(d-1)$. This code has $n=2$, $\delta = 2$, $M=d^1$. For $n=3$, we can find the GHZ states for arbitrary dimensions, which can be constructed from the code $000, 111, \ldots, (d-1)(d-1)(d-1)$, which has $\delta = 3$ and $M=d^1$. As already mentioned in the introduction, for $n=4$ no AME state exists for $d=2$, however for $d=3$ the \AME{4,3} state given in Equation~\eqref{eq:popescuestate} can also be constructed from an MDS code, the $[4,2,3]_3$ ternary Hamming code.

A wide class of MDS codes is given by the Reed-Solomon codes and its generalizations \cite{Reed1960, MacWilliams1977, Seroussi1986}, which give MDS codes for $n=d-1$, $n=d$, and $n=d+1$, for $d=p^x$ being a positive power of a prime number $p$. From the Reed-Solomon codes, MDS codes can also be constructed for $n<d-1$ \cite{Singleton1964}. This shows that AME states exist for any number of parties if the system dimensions are chosen right.

At this point we would like to mention that after posting a preliminary version of our last paper on this subject \cite{Helwig2012}, it has been brought to our attention by Gerardo Adesso that the results of this section have already been previously discovered by Ashish Thapliyal and coworkers, and were presented at a conference in 2003 \cite{Thapliyal2003}, but remained unpublished.

\section{Equivalence of AME states and QSS schemes}
\label{section:AME-QSS}
In Ref.~\cite{Helwig2012}, we showed that \AME{2m,d} states, i.e., AME states shared between an even number of parties, are equivalent to pure state threshold quantum secret sharing (QSS) schemes that have AME states as basis states and share and secret dimension equal to $d$. Here we will give an information-theoretic proof of this equivalence, which shows that the requirement that the basis states of the QSS scheme are AME states is redundant, as it follows from this proof that these states are always absolutely maximally entangled. Before stating the theorem and the proof, we give a short motivation why AME states and QSS schemes are related.

Consider an \AME{2m,d} state shared among an even number of parties. If we take any bipartition into two sets of parties $A$ and $B$, each of size $m$, a $d^m$ dimensional state can be teleported from one set to the other due to the maximal entanglement between $A$ and $B$. Moreover, we have shown in Ref.~\cite{Helwig2012}, that the teleportation can be performed in such a way that each party in the sending set $B$ performs a local teleportation operation on their qudit, while the parties in the receiving set $A$ perform a joint quantum operation to recover all $m$ teleported qudits. This is depicted in Figure~\ref{fig:teleport} for the case of $m=4$. This also works if only one party in $B$, which we call the \emph{dealer} $D$, performs the teleportation operation, while the others do nothing. Then the teleported $d$-dimensional state can still be recovered by the players in set $A$. Furthermore, this also works for any other bipartition into sets $A'$ and $B'$ of size $m$, with $D\in B'$, without changing the teleportation operation $D$ has to 
perform, 
but now the parties in $A'$ can recover the teleported state (see Figure~\ref{fig:ameqss}). This means that any set with $m$ parties can recover the state. Moreover, the no-cloning theorem guarantees that the complement of a set that can recover the state has no information about the state. Hence all sets with less than $m$ parties cannot gain any information about the state. This, however, are exactly the requirements for a threshold QSS scheme, therefore we have constructed a \QSS{m}{2m-1} from the \AME{2m,d} state. To formally show this, and moreover that it also works in the opposite direction, meaning that a \QSS{m}{2m-1} is always related to an \AME{2m,d} state, we will use  the information theoretic description of QSS schemes as introduced in Ref.~\cite{Imai2005}.

\begin{figure}[htb]
	\centering
	\includegraphics[width=.3\textwidth]{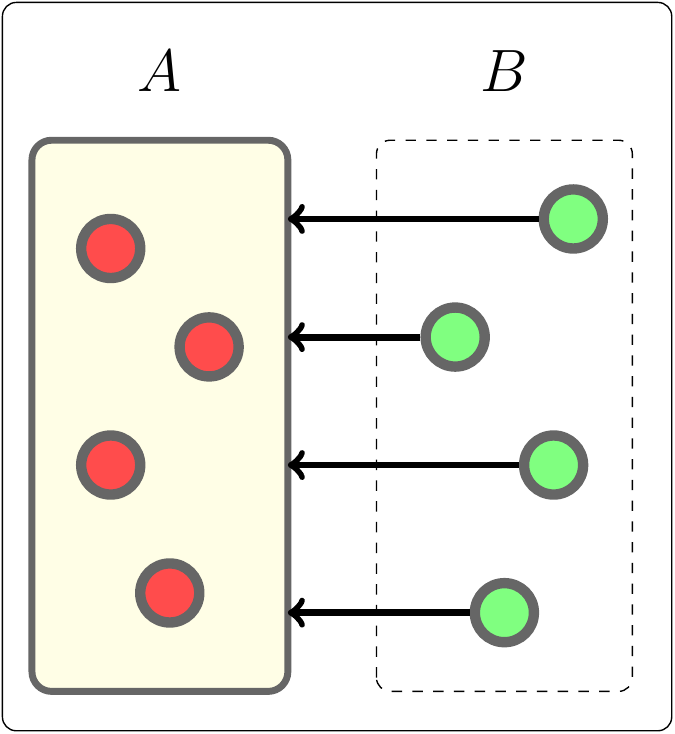}
	\caption{(Color online) Parties in $B$ (green) perform local teleportation operations, parties in $A$ (red) can recover teleported states by performing a joint quantum operation}
	\label{fig:teleport}
\end{figure}

\begin{figure}[bht]
	\centering
	\includegraphics[width=0.8\textwidth]{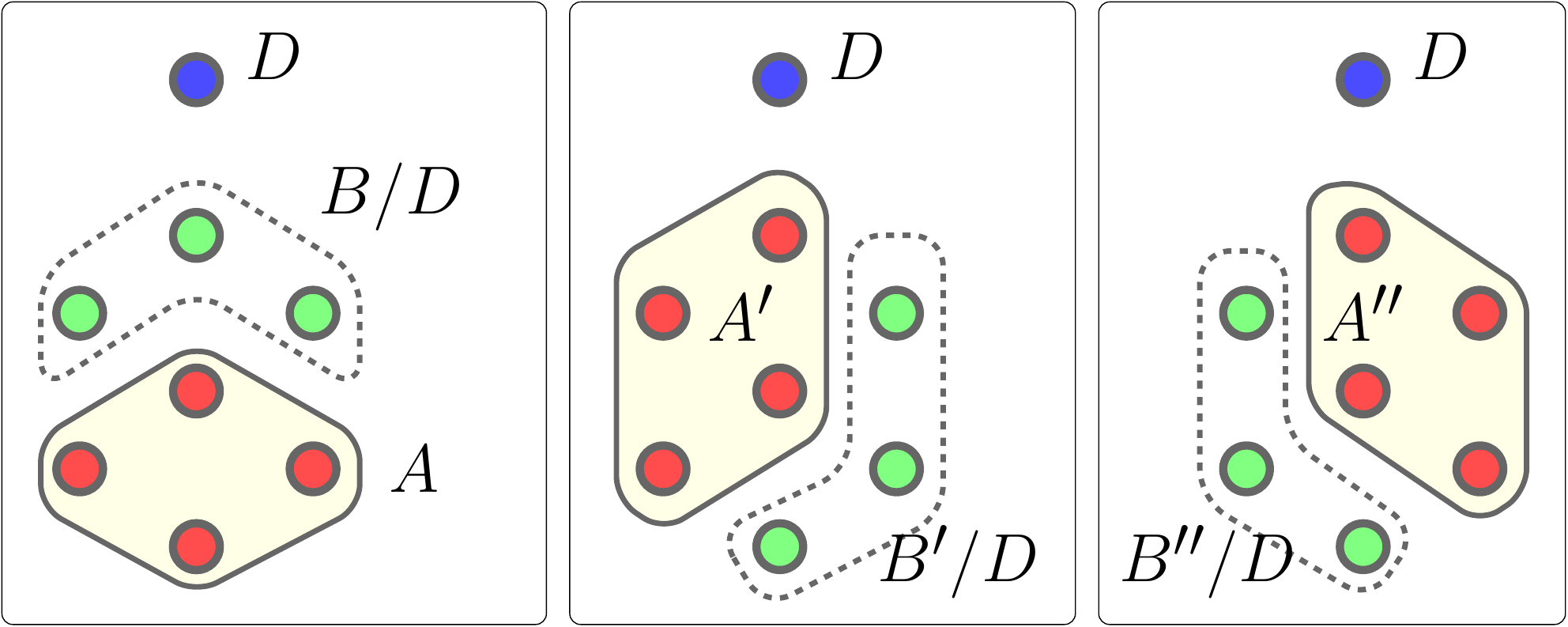}
	\caption{(Color online) After $D$ (blue) performs her teleportation operation, any set of $m$ parties (red), $A$, $A'$, $A''$ etc., can recover the teleported state. Any set of parties with $m-1$ or less parties (any set consisting only of green parties) cannot gain any information about the teleported state.}
	\label{fig:ameqss}
\end{figure}

Let us quickly review the framework for a pure state \QSS{m}{2m-1} \cite{Cleve1999}. A secret $S$ is distributed among the players $P=\{1,\ldots,2m-1\}$ such that any set $A\subseteq P$ with $|A| \geq m$ can recover the secret, while any set $B \subset P$ with $|B| < m$ cannot gain any information about the secret. We further only consider the case where the dimension $d$ of the secret is the same as the dimension of each player's share.  

The secret is assumed to lie in the Hilbert space $\HH_S \cong \CC^d$, and the share of party $i$ in $\HH_i \cong \CC^d$. The encoding is described by an isometry
\begin{equation}
	\label{eq:encoding}
	U_S: \HH_S \rightarrow \HH_1 \otimes \dots \otimes \HH_{2m-1}.
\end{equation}
The secret $S$ is chosen randomly and thus is described by $\rho_S = 1/d \sum_i \ket{i}\bra{i}$. We consider its purification by introducing a reference system $R$ such that $\ket{RS} = 1/\sqrt{d} \sum_i \ket{i}\ket{i} \in \HH_R \otimes \HH_S$. Let $\rho_{RA}$ denote the combined state of the reference system and a set of players $A\subseteq P$ after $U_S$ has been applied to the secret. Then the players $A$ can recover the secret, if there exists a completely positive map $T_A: \HH_A \rightarrow \HH_S$ such that \cite{Imai2005, Schumacher1996}
\begin{equation}
	\label{eq:RA_recover}
	\Id_R \otimes T_A (\rho_{RA}) = \ket{RS}.
\end{equation}
This can be stated in terms of the mutual information
\begin{equation}
	\label{eq:mutualinfo}
	I(X:Y) = S(X) + S(Y) - S(X,Y)
\end{equation}
as follows:
\begin{definition}
	An isometry $U_S: \HH_S \rightarrow \HH_1 \otimes \dots \otimes \HH_{2m-1}$ creates a \QSS{m}{2m-1} if and only if, after applying to the system $S$ of the purification $\ket{RS}$, the mutual information between $R$ and an authorized (unauthorized) set of players $A$ ($B$) satisfies
\begin{align}
	\label{eq:QSSconditionA}
	I(R:A) &= I(R:S) = 2S(S) && \text{if } |A| \geq m\\
	\label{eq:QSSconditionB}
	I(R:B) &= 0 && \text{if } |B| < m.
\end{align}
\end{definition}
Here $S$ is the von Neumann entropy, and because of $S(i)\geq S(S)$ \cite{Imai2005}, we have
\begin{equation}
	\label{eq:secretentropy}
	S(S) = S(R) = S(i) = \log d.
\end{equation}
From \cref{eq:mutualinfo,eq:QSSconditionA,eq:QSSconditionB} it immediately follows that
\begin{align}
	\label{eq:entropyRA}
	S(R,A) &= S(A) - S(R) \quad \text{if } |A| \geq m\\
	\label{eq:entropyRB}
	S(R,B) &= S(B) + S(R) \quad \text{if } |B| < m.
\end{align}

\begin{theorem}
	\label{theorem:AME-QSS}
For a state $\ket{\Phi}$ the following two properties are equivalent:
\begin{enumerate}[(i)]
	\item $\ket{\Phi}$ is an \AME{2m,d} state.
	\item $\ket{\Phi}$ is the purification of a \QSS{m}{2m-1}, whose share and secret dimensions are $d$.
\end{enumerate}
\end{theorem}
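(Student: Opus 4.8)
The plan is to prove both implications by identifying one of the $2m$ parties of $\ket{\Phi}$ with the reference system $R$ of the QSS purification, so that the remaining $2m-1$ parties play the role of the players. The whole argument rests on two structural facts: because $\ket{\Phi}$ is pure, a set and its complement in all $2m$ parties are purity-dual, $S(X)=S(X^{c})$; and the threshold structure is self-complementary, since among the $2m-1$ players the complement of an authorized set (size $\geq m$) is an unauthorized set (size $\leq m-1$) and vice versa. Throughout I will use the entropy characterization (property (v) of \cref{def:AME}) on the AME side and the consequences \cref{eq:secretentropy,eq:entropyRA,eq:entropyRB} of the QSS conditions on the other.

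For (i)$\Rightarrow$(ii) I would first build the encoding isometry. Fixing party $2m$ as $R$ and using the maximal entanglement of $\ket{\Phi}$ across the single-party cut $\{2m\}\mid\{1,\dots,2m-1\}$ (property (i) of \cref{def:AME} with $|B|=1$), I can write $\ket{\Phi}=\frac{1}{\sqrt d}\sum_{k}\ket{k}_{2m}\ket{\phi(k)}$ with $\braket{\phi(k)|\phi(k')}=\delta_{kk'}$, and define $U_{S}\colon\ket{k}_{S}\mapsto\ket{\phi(k)}$, which is an isometry precisely because the $\ket{\phi(k)}$ are orthonormal; then $\ket{\Phi}=(\Id_{R}\otimes U_{S})\ket{RS}$. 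It remains to verify \cref{eq:QSSconditionA,eq:QSSconditionB}, which is bookkeeping with property (v): for a player set $A$ with $|A|\geq m$ one has $S(R)=\log d$, $S(A)=S(A^{c})=(2m-|A|)\log d$ and $S(R,A)=(2m-1-|A|)\log d$ (both complements have size $\leq m$), whence $I(R:A)=2\log d=2S(S)$; for $|B|<m$ the same substitutions give $I(R:B)=\log d+|B|\log d-(|B|+1)\log d=0$.

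The substantive direction is (ii)$\Rightarrow$(i). I would first reduce the AME property to a single statement: it suffices to show that every $m$-element subset $T$ of the $2m$ parties is maximally mixed, since any smaller subset is then maximally mixed by tracing out. Splitting on whether $R\in T$ and using \cref{eq:entropyRA,eq:entropyRB} together with $S(X)=S(X^{c})$, both cases reduce to the single claim that every unauthorized player set $B$ of size $m-1$ satisfies $S(B)=(m-1)\log d$. The main obstacle is that this lower bound cannot come from the entropy relations and purity alone: those relations are satisfied by a whole self-consistent family of entropy values (for instance, for $m=2$ they do not, by themselves, exclude $S(\{i\})=0$), so one must genuinely exploit a saturated inequality.

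Accordingly, I would prove by induction on $b$ that every unauthorized player set $B$ with $|B|=b\leq m-1$ has $S(B)=b\log d$. Given $B$ with $|B|=b$, write $B=B_{0}\cup\{k\}$ and choose a disjoint player set $C$ with $|C|=m-b$, so that $B\cup C$ is authorized (size $m$) while $B_{0}\cup C$ is unauthorized (size $m-1$). Then $I(R:k\mid B_{0}C)=I(R:B\cup C)-I(R:B_{0}\cup C)=2\log d-0$, which is the maximal value allowed by $I(R:k\mid B_{0}C)\leq 2S(k)\leq 2\log d$. Writing $I(R:k\mid B_{0}C)=I(k:RB_{0}C)-I(k:B_{0}C)$ with $I(k:RB_{0}C)\leq 2S(k)$ and $I(k:B_{0}C)\geq 0$, saturation forces both $S(k)=\log d$ and the decoupling $I(k:B_{0}C)=0$, hence $I(k:B_{0})=0$ by monotonicity. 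Thus $S(B)=S(B_{0})+S(k)=b\log d$ by the induction hypothesis, closing the induction. Finally I would assemble the result: the claim gives $S(B)=(m-1)\log d$ for the maximal unauthorized sets, and the two reductions above then yield $S(T)=m\log d$ for every $m$-subset $T$, so $\ket{\Phi}$ is an \AME{2m,d} state.
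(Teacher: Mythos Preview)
Your proof is correct. The direction (i)$\Rightarrow$(ii) is essentially the paper's argument, made more explicit by exhibiting the encoding isometry $U_S$; the paper simply invokes property~(v) and the definition of mutual information.

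For (ii)$\Rightarrow$(i) you take a different technical route. The paper argues at the level of entropies: for $|B|=m-1$ it combines \cref{eq:entropyRA,eq:entropyRB} with the Araki--Lieb inequality to obtain $S(B\cup i)=S(B)+S(i)$, and then uses strong subadditivity in the form $S(X,Y)-S(Y)\geq S(X,Y,Z)-S(Y,Z)$ to propagate this additivity downward to smaller $B$, finally building up any $m$-set from the empty set in steps of $S(i)=\log d$. You instead phrase everything through conditional mutual information: the chain rule gives $I(R:k\mid B_0C)=2\log d$, and saturation of $I(R:k\mid B_0C)\leq 2S(k)$ together with the decomposition $I(R:k\mid B_0C)=I(k:RB_0C)-I(k:B_0C)$ forces both $S(k)=\log d$ and $I(k:B_0C)=0$, whence $I(k:B_0)=0$ by monotonicity. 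Under the hood the same inequalities (Araki--Lieb for the $2S(k)$ bound, strong subadditivity for nonnegativity and monotonicity of mutual information) are doing the work, so the two arguments are close cousins. Your packaging has the small bonus of re-deriving $S(k)=\log d$ from the saturation rather than importing it from \cref{eq:secretentropy}; your parenthetical remark about $m=2$ is slightly off, since \cref{eq:secretentropy} already excludes $S(\{i\})=0$, but this does not affect the proof itself.
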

\begin{proof}
$(i) \rightarrow (ii)$:
We need to show that for an \AME{2m,d} state \cref{eq:QSSconditionA,eq:QSSconditionB} are satisfied, where $R$ can be any of the $2m$ party. This follows directly from the definition of the mutual information, \cref{eq:mutualinfo}, and \cref{def:AME} (\ref{def:AMEentropy}).

$(ii)\rightarrow (i)$:
Consider an unauthorized set of players $B$, with $|B| = m-1$. Then the set is $B \cup i$ is authorized for any additional player $i \notin B$, and from Equation~\eqref{eq:entropyRA} we have
\begin{equation}
 S(B,i,R) = S(B,i) - S(R)
\end{equation}
On the other hand, using the Araki-Lieb inequality \cite{Nielsen2000} $S(X,Y) \geq S(X) - S(Y)$ and Equation~\eqref{eq:entropyRB} gives
\begin{equation}
 S(B,i,R) \geq S(B,R) - S(i) = S(B) + S(R) - S(i).
\end{equation}
Combining the last two equations and using $S(S)=S(R)=S(i)$ shows
\begin{equation}
 S(B,i) \geq S(B) + S(i),
\end{equation}
where equallity must hold due to the subadditivity of the entropy $S(X,Y) \leq S(X) + S(Y)$. This means that the entropy increases maximally when adding one player's share to $m-1$ shares. The strong subadditivity of the entropy \cite{Nielsen2000}
\begin{equation}
 S(X,Y) - S(Y) \geq S(X,Y,Z) - S(Y,Z)
\end{equation}
states that adding one system $X$ to a system $Y$ increases the entropy at least by as much as adding the system $X$ to a larger system $Y \cup Z$ that contains $Y$. So in our case, adding one share to less than $m-1$ shares increases the entropy by at least $S(i)$, and since this is the maximum, it increases the entropy exactly by $S(i)$. Hence, starting out with a set of no shares, and repeatedly adding one share to the set until the set contains any $m$ shares and is authorized, shows that any set of $m$ shares has entropy $m S(i)$. This shows that the entropy is maximal for any subset of $m$ parties and thus $\ket{\Phi}$ is an \AME{2m,d} state.
\end{proof}

\begin{corollary}
The encoded state $U_S \ket{S}$ of a specific secret $\ket{S}$ with a $((m,2m-1))$ threshold QSS protocol with share and secret dimension $d$ is an \AME{2m-1,d} state.
\end{corollary}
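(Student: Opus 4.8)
The plan is to deduce this directly from \cref{theorem:AME-QSS} by realizing the encoded state $U_S\ket{S}$ as a one-party projection of the \AME{2m,d} purification. By \cref{theorem:AME-QSS}, the purification $\ket{\Phi}=(\Id_R\otimes U_S)\ket{RS}$ of the threshold scheme is an \AME{2m,d} state on the $2m$ parties $\{R,1,\dots,2m-1\}$, with the reference system $R$ playing the role of one party. Writing $\ket{\Phi}=\frac{1}{\sqrt d}\sum_i\ket{i}_R\,U_S\ket{i}$ and projecting $R$ onto a unit vector $\ket{s}$ gives $\bra{s}_R\ket{\Phi}=\frac{1}{\sqrt d}\,U_S\ket{s^*}$, i.e.\ up to the normalization factor $1/\sqrt d$ exactly the encoded state $U_S\ket{s^*}$; choosing $\ket{s}$ to be the complex conjugate of $\ket{S}$ therefore produces $U_S\ket{S}$ on the remaining $2m-1$ parties. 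Hence it suffices to show that projecting one party of an \AME{2m,d} state onto a pure state always leaves an \AME{2m-1,d} state on the rest.

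To prove this reduction I would verify the totally-mixed condition of \cref{def:AME} for the projected state: every subset $C\subseteq\{1,\dots,2m-1\}$ with $|C|\le\floor{\frac{2m-1}{2}}=m-1$ must have totally mixed reduced density matrix. The key point is that in $\ket{\Phi}$ the enlarged set $C\cup\{R\}$ has size at most $m=\floor{\frac{2m}{2}}$, so by the AME property $\rho_{C\cup R}=d^{-(|C|+1)}\Id$ is totally mixed. This factorizes as $\rho_C\otimes\rho_R$ with $\rho_C=d^{-|C|}\Id$ and $\rho_R=\frac1d\Id$, so $R$ is uncorrelated with $C$, and projecting $R$ onto any unit vector $\ket{s}$ leaves $\rho_C$ unchanged, $\bra{s}_R\rho_{C\cup R}\ket{s}_R=\frac1d\rho_C\propto\rho_C$. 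Since $m-1=\floor{\frac{2m-1}{2}}$ and smaller subsets follow by a partial trace, the projected state meets the definition of an \AME{2m-1,d} state.

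The step requiring the most care is this projection argument, since it is the only place where the maximal-entanglement hypothesis enters beyond the direct appeal to the theorem. One must check that the product structure $\rho_{C\cup R}=\rho_C\otimes\rho_R$ genuinely makes $\rho_C$ invariant under projecting $R$ for \emph{every} unit vector $\ket{s}$, not merely for basis states; this is what upgrades the statement from a single fixed secret to \emph{every} fixed secret $\ket{S}$. I would therefore state the product-state/projection lemma for an arbitrary $\ket{s}$ and only afterwards specialize to $\ket{s}=\ket{S^*}$.
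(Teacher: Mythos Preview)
Your argument is correct. The paper states this corollary without proof, treating it as an immediate consequence of \cref{theorem:AME-QSS}; your projection argument is exactly the natural way to make that deduction precise, and the key observation that $\rho_{C\cup R}$ is totally mixed (hence a product) whenever $|C|\le m-1$ is the right one.
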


\section{Sharing multiple secrets}
\label{section:muitisecret}
In the previous section, we outlined how an AME state can be used to construct a QSS scheme. The role of the dealer is assigned to one of the parties and he performs a teleportation operation on his qudit, which encodes the teleported qudit onto the qudits of the remaining parties such that the criteria for a QSS scheme are met. While Theorem~\ref{theorem:AME-QSS} shows the equivalence of AME states and QSS schemes, the actual protocol for the encoding and decoding operations has been presented in Ref.~\cite{Helwig2012}. Note that in the described scenario, the role of the dealer can be assigned to any player. Thus one may ask, what happens if more than one of the players assumes the role of the dealer. The answer is that, given an \AME{2m, d} state, up to $m$ players are able to independently encode one qudit each onto the qudits of the remaining players in such a way that results in a QSS scheme with a more general access structure.

For a secret sharing scheme with a general access structure, each set of players falls into one of three categories \cite{Iwamoto2005, Gheorghiu2012}.
\begin{enumerate}
 \item \emph{Authorized}: A set of players is authorized, if it can recover the secret
 \item \emph{Forbidden}: A set of players is called a forbidden set, if the players cannot gain any information about the encoded secret
 \item \emph{Intermediate}: A set of players is classified as an intermediate set, if they cannot recover set secret, but may be able to gain part of the information. This means that the reduced density matrix of that set of players depends on the encoded secret, but not enough as to recover the secret.
\end{enumerate}

A special kind of access structure is a $(m,L,n)$ \emph{ramp secret sharing scheme} \cite{Blakley1984}. Here $n$ is the total number of players, $m$ is the number of players needed to recover the secret, and $L$ is the number of shares that have to be removed from a minimal authorized set to destroy all information about the secret. In terms of the above defined set categories that means that any set of $m$ or more players is authorized,  any set of $m-L$ or less players is forbidden, and any set consisting of more than $m-L$, but less than $m$ players is an intermediate set. This is the access structure we get from an \AME{2m,d} state if more than one party assumes the role of the dealer.

\begin{theorem}
 \label{theorem:multisecrets}
  Given an \AME{2m,d} state, a QSS scheme with secret dimension $d^L$ and a $(m,L,2m-L)$ ramp access structure can be constructed for all $1\leq L \leq m$.
\end{theorem}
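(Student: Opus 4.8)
The plan is to realize the scheme through a multi-dealer teleportation protocol and then verify the ramp access structure by the same kind of entropic argument used in \cref{theorem:AME-QSS}. Starting from an \AME{2m,d} state on parties $1,\dots,2m$, I would designate $L$ of them, say parties $1,\dots,L$ (possible because $L\le m$), as dealers, leaving $2m-L$ parties as share holders. Each dealer $j$ holds a $d$-dimensional secret $S_j$, which I purify by a reference $R_j$ with $\ket{R_jS_j}=\tfrac{1}{\sqrt d}\sum_i\ket{i}\ket{i}$, and encodes it by performing a generalized Bell measurement on the pair $(S_j,\text{qudit }j)$ followed by a correction. The joint secret then lives in $\CC^{d^L}$, so the secret dimension is $d^L$ and the number of players is $n=2m-L$, matching the claimed parameters.

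The crucial structural observation, which I would isolate as a lemma, is that teleporting a reference-entangled qudit into slot $j$ of an AME state merely relabels party $j$ as the reference $R_j$, up to a correction unitary acting locally on $R_j$. Writing $\ket\Phi=\sum_{k}\ket{k}_j\ket{\phi_k}$ with the $\ket{\phi_k}$ orthonormal (party $j$ is maximally mixed), a direct evaluation of the post-measurement state $\bra{\Psi_{ab}}_{S_j,j}\big(\ket{R_jS_j}\otimes\ket\Phi\big)$ yields $\propto\sum_k\omega^{-ak}\ket{k}_{R_j}\ket{\phi_{k+b}}$, which equals $\sum_k\ket{k}_{R_j}\ket{\phi_k}$ up to a single generalized-Pauli unitary supported on $R_j$, i.e.\ $\ket\Phi$ with $j$ renamed $R_j$. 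Since the $L$ Bell measurements act on disjoint qudits and the corrections live on the references alone, they compose, and the encoded state $\ket\Psi$ on $(R_1,\dots,R_L,\text{the }2m-L\text{ shares})$ is again an \AME{2m,d} state, now with the references occupying the $L$ dealer slots. Because these unitaries are local to $R$, they leave every entropy and mutual information appearing below invariant and can be ignored for what follows.

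With $\ket\Psi$ identified as an AME state, I would read off the access structure entirely from \cref{def:AME}. Set $R=R_1\cdots R_L$, so $S(R)=L\log d$, and let $A$ be a set of $a$ shares. Using that every subset of at most $m$ of the $2m$ systems is maximally mixed, together with purity of $\ket\Psi$ to pass to complements, a short computation gives $I(R:A)=2\min\{L,\max\{0,L+a-m\}\}\log d$. Hence $I(R:A)=0$ whenever $a\le m-L$ (forbidden), $I(R:A)=2L\log d=2S(S)$ whenever $a\ge m$ (authorized, so the full secret is recoverable by the criterion \eqref{eq:QSSconditionA}/\eqref{eq:RA_recover}), and $0<I(R:A)<2S(S)$ for $m-L<a<m$ (intermediate). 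This is precisely the $(m,L,2m-L)$ ramp access structure.

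The main obstacle is the reduction lemma rather than the entropy bookkeeping. I must check that the post-measurement state equals the relabeled AME state up to a generalized-Pauli correction supported on the reference $R_j$ alone---so that it is automatically disjoint from any share set $A$ and leaves $S(R)$, $S(A)$ and $S(R,A)$ untouched---and that the $L$ teleportations on disjoint qudits compose without interference despite their outcome-dependent corrections. Once the encoded state is certified to be an \AME{2m,d} state on references-plus-shares, the three-case count is immediate from the AME property, and invoking \cite{Imai2005, Schumacher1996} promotes the condition $I(R:A)=2S(S)$ to an explicit recovery map for every authorized set.
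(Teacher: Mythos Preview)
Your argument is correct and takes a genuinely different route from the paper's.

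The paper proves \cref{theorem:multisecrets} operationally: it writes down the post-measurement state $\ket{\Phi_S}$ explicitly, exhibits a concrete recovery unitary for any set $A$ of size $m$ (the ``sorting'' unitary $V$ followed by the Pauli corrections), invokes no-cloning to certify that sets of size $\le m-L$ are forbidden, and handles intermediate sets by an inductive trick---starting from the $L=1$ threshold scheme and arguing that appointing additional dealers cannot promote a previously unauthorized set of $m-1$ players to authorized. Your approach instead isolates a clean structural lemma: each dealer's Bell measurement, up to a generalized Pauli on the corresponding reference qudit, merely relabels that dealer's slot as $R_j$, so the purified encoded state is itself an \AME{2m,d} state on $(R_1,\dots,R_L,\text{shares})$. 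From there you read off a closed-form expression $I(R:A)=2\min\{L,\max\{0,L+|A|-m\}\}\log d$ directly from \cref{def:AME}\cref{def:AMEentropy}, which simultaneously handles all three regimes.

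What each buys: the paper's proof is constructive---it hands the authorized players an explicit decoding circuit---whereas yours appeals to the information-theoretic recovery criterion of \cite{Imai2005,Schumacher1996} and leaves the decoder implicit. On the other hand, your argument is more uniform (one entropy formula covers forbidden, intermediate, and authorized sets without separate case analysis), quantifies exactly how much information an intermediate set carries, and in fact proves something strictly stronger than the forward direction of \cref{EntQECC:theorem:AME-ramp}: the purification is not just maximally entangled across bipartitions keeping $R$ together, but fully AME. The one point to keep honest in your write-up is that the Pauli correction lives on $R_j$, which the dealer does not hold; you already note this is immaterial for entropies, but it is worth stating explicitly that in the operational protocol the outcome is broadcast and absorbed into the players' recovery map rather than literally applied to the reference.
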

\begin{proof}
 The encoding of the secret is done by assigning the role of dealer to $L$ of the $2m$ players. For simplicity we choose them to be the first $L$ players. Each of them performs a Bell measurement on their respective qudit of the AME state and one qudit of the secret. The Bell measurement is described by the general $d$-dim Bell states $\ket{\Psi_{kl}}$ and the unitaries $U_{kl}$ that transform among them \cite{Bennett1993}
\begin{align}
	\label{eq:Bell}
	\ket{\Psi_{qp}} &= \frac{1}{\sqrt{d}}\sum_j e^{2\pi ijq/d} \ket{j}\ket{j+p}\\
	\label{eq:BellU}
	U_{qp} &= \sum_j e^{2\pi ijq/d} \ket{j}\bra{j+p},
\end{align}
where the kets are understood to be mod $d$. For a secret $\ket{s}$ and outcomes $(q_1,p_1) \ldots (q_L,p_L)$ for the Bell measurement of the dealers, the initial \AME{2m,d} state is transformed to
\begin{equation}
  \label{eq:PhiS}
  \ket{\Phi_S} = \frac{1}{\sqrt{d^{m-L}}}
  \sum_{k\in \ZZ_d^m} s_{\vv{q}\vv{p},k_1\cdots k_L}
  \ket{k_{L+1}}_{B_1} \cdots \ket{k_m}_{B_{m-L}}
  \ket{\phi(k)}_{A}.
\end{equation}
Here
\begin{equation}
  \label{eq:s}
 s_{\vv{q}\vv{p},k_1\cdots k_L} = 
  \braket{k_1\cdots k_L|U_{q_1p_1}^{\dagger}\otimes \cdots \otimes U_{q_Lp_L}^{\dagger}|s},
\end{equation}
and the partition of the remaining $2m-L$ parties into two sets $A$ and $B$ of size $m$ and $m-L$, respectively, is arbitrary. After obtaining their measurement outcomes, the dealers broadcast their results to all of the remaining players. This concludes the encoding process.

To show that any set of $m$ or more players is authorized, it suffices to show that set $A$ in Equation~\ref{eq:PhiS} can recover the secret. They can do so by applying the unitary operation
\begin{align}
  U = ( U_{q_1p_1} \otimes \cdots \otimes U_{q_Lp_L} \otimes \Id ) V
  \intertext{with}
  \label{eq:sortV}
  V = \sum_{k\in \ZZ_d^m} \ket{k_1}\cdots\ket{k_m} \bra{\phi(k)},
\end{align}
to their system. This changes the state to
\begin{equation}
 U \ket{\Phi_S} = \frac{1}{\sqrt{d^{m-L}}}
  \sum_{(k_{L+1},\ldots,k_m)\in \ZZ_d^{m-L}}
  \ket{k_{L+1}}_{B_1} \cdots \ket{k_m}_{B_{m-L}}
  \ket{s}_{A'}
  \ket{k_{L+1}}_{A_{L+1}} \cdots \ket{k_m}_{A_m}
\end{equation}
where $A' = \{ A_1, \ldots, A_L \}$. Thus the players in set $A$ have the secret in their possession. It immediately follows from the no-cloning theorem that $B$, and thus any set of size $m-L$ or less, cannot have any information about the secret since all information is located in the complement set. Alternatively, this also follows from the observation that the reduced density matrix of $B$ is always completely mixed, independent of the secret.

The last thing left to show is that all sets with more than $m-L$ but fewer than $m$ players are indeed intermediate sets. To see that, consider the case $L=1$, where a set $C$ of $m-1$ players is not authorized to recover the secret. If one more player in the complement of $C$ assumes the role of the dealer, the scheme is changes to $L=2$. This operation does not change the fact that $C$ cannot recover the first secret, and thus it is still not authorized for $L=2$. This argument can be continued to any other $1 < L \leq m$ by adding more dealers.
Hence a set of $m-1$ (or fewer) players is not authorized to recover the secret for all value of $1\leq L \leq m$. That a set of more than $m-L$ players is not forbidden follows from the fact that information cannot be lost and thus the complement of a forbidden set has to be authorized. However, we just argued that the complement of a set of more than $m-L$ players is not authorized (since it consists of less than $m$ players). Hence any set with more than $m-L$ and fewer than $m$ players is an intermediate set.
\end{proof}

A closer look at the proof shows us that it actually is not absolutely necessary for the initial state to be maximally entangled with respect to any bipartition, but only for bipartitions for which all dealers are in the same set. In fact, we can generalize the proof of Theorem~\ref{theorem:AME-QSS} to the case of ramp QSS to show that this is a necessary and sufficient condition for the construction of $(m,L,2m-L)$ ramp QSS schemes.

\begin{theorem}
	\label{EntQECC:theorem:AME-ramp}
For a state $\ket{\Phi} \in \HH_P \otimes \HH_R$, shared between $2m-L$ players $P$, each holding a qudit, and $L$ reference qudits, the following two properties are equivalent:
\begin{enumerate}[(i)]
	\item $\ket{\Phi}$ is maximally entangled for any bipartition for which the $L$ reference qudits are in the same set.
	\item $\ket{\Phi}$ is the purification of a $(m,L,2m-L)$ ramp QSS schemes. The encoded secret of the ramp QSS scheme has dimension $d^L$, and each share has dimension $d$.
\end{enumerate}
\end{theorem}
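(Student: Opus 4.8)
The plan is to rerun the information-theoretic argument of Theorem~\ref{theorem:AME-QSS}, upgrading the single missing share there to the width-$L$ gap that here separates forbidden from authorized sets. First I would restate both conditions purely in terms of von Neumann entropies. Since $\ket{\Phi}$ is pure on all $2m$ qudits, a bipartition that keeps the $L$ reference qudits together is exactly a split $\{A,\,R\cup(P\setminus A)\}$ for some player set $A\subseteq P$, so condition~(i) is equivalent to $S(A)=\min(|A|,\,2m-|A|)\log d$ for every $A\subseteq P$; in particular every player set with $|A|\le m$ is maximally mixed. On the QSS side I fix $S(R)=S(S)=L\log d$ (secret dimension $d^{L}$) and $S(i)\le\log d$ (share dimension $d$), and, in exact analogy with \cref{eq:entropyRA,eq:entropyRB} but now with $S(R)=L\log d$, translate the access structure into $S(R,A)=S(A)-L\log d$ for authorized sets ($|A|\ge m$) and $S(R,B)=S(B)+L\log d$ for forbidden sets ($|B|\le m-L$).

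I expect $(i)\to(ii)$ to be as immediate as in Theorem~\ref{theorem:AME-QSS}: substituting $S(A)=\min(|A|,2m-|A|)\log d$ into the definition \eqref{eq:mutualinfo} of the mutual information gives $I(R:A)=2L\log d=2S(R)$ for $|A|\ge m$, so \eqref{eq:RA_recover} supplies a recovery map and the set is authorized, and $I(R:B)=0$ for $|B|\le m-L$, so that set is forbidden. To confirm that the scheme is genuinely a ramp scheme I would additionally evaluate, for an intermediate set of size $a$ with $m-L<a<m$, the quantity $I(R:A)=2(a-m+L)\log d$, which lies strictly between $0$ and $2L\log d$; hence each share added beyond the first $m-L$ leaks exactly one more qudit of the secret, matching the $(m,L,2m-L)$ ramp access structure.

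The substance of the theorem is $(ii)\to(i)$, where the QSS conditions constrain entropies only at sizes $\le m-L$ and $\ge m$ yet I must pin every player set of size $\le m$ to entropy $|A|\log d$. I would argue as follows. For a maximal forbidden set $F$ ($|F|=m-L$) and an authorized superset $F'=F\cup T$ ($|F'|=m$), the Araki--Lieb inequality applied to $S(R,F')$, combined with the forbidden identity and $S(T)\le L\log d$, forces $S(F')\ge S(F)+L\log d$, while subadditivity forces the reverse, so $S(F')=S(F)+L\log d$ with every inequality tight. Because along any one-share-at-a-time chain from $F$ to $F'$ each increment is at most $\log d$ and the $L$ increments sum to $L\log d$, every single-share increment inside the intermediate band equals $\log d$; strong subadditivity then propagates these maximal increments downward, so that adding any share to any set of size $\le m-1$ raises the entropy by exactly $\log d$. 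Building up from the empty set yields $S(A)=|A|\log d$ for all $|A|\le m$, and purity together with the forbidden identity extends this to $S(A)=(2m-|A|)\log d$ for larger sets, which is precisely condition~(i).

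The main obstacle is exactly this intermediate band: unlike the threshold case, where the single missing share is controlled directly by comparing a maximal unauthorized set with its authorized one-share extension, here the intermediate sizes $m-L+1,\dots,m-1$ carry no QSS constraint of their own. The device that overcomes it is the tightness of the Araki--Lieb/subadditivity sandwich together with the elementary observation that $L$ increments each bounded by $\log d$ and summing to $L\log d$ must all equal $\log d$; strong subadditivity then performs the bookkeeping that carries the maximal-increment property to all smaller sets.
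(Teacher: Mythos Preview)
Your proposal is correct and follows essentially the same route as the paper's proof in Appendix~\ref{appendix:ramp}: both directions reduce to entropy identities, and for $(ii)\to(i)$ you use the same Araki--Lieb/subadditivity sandwich on a maximal forbidden set $F$ and an authorized extension $F\cup T$ with $|T|=L$, then invoke strong subadditivity to propagate the maximal increments down to smaller sets. The only cosmetic differences are that the paper first establishes $S(C)=L\log d$ for every $L$-element player set (via \cite{Imai2005}) before the sandwich, whereas you obtain it as a by-product of the sandwich being tight, and the paper applies strong subadditivity to the $L$-block increment before splitting into single shares while you split first; your extra computation of $I(R:A)$ on intermediate sets is a nice sanity check but not required by the definition used here.
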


The proof is a straightforward generalization of the proof of Theorem~\ref{theorem:AME-QSS} and is provided in Appendix~\ref{appendix:ramp}.

\section{Open-destination teleportation}
\label{section:openteleport}
Given a state with such high amount of entanglement as the AME state has, one cannot help thinking about ways of using these resources for teleportation protocols. In Ref.~\cite{Helwig2012} we already showed how AME states can be used for two different teleportation scenarios that require either sending or receiving parties to perform joint quantum operations, while the other end may only use local quantum operations.

Another teleportation scenario that uses genuine multipartite entanglement, and has already been demonstrated experimentally \cite{Zhao2004}, is open-destination teleportation. In this scenario, a genuinely multipartite entangled state is shared between $n$ parties, each in the possession of one qudit. One of the parties, the dealer, performs a teleportation operation on her qudit and an ancillary qudit $\ket{\Phi}$. After this teleportation operation, the final destination of $\ket{\Phi}$ is still undecided, thus open-destination teleportation. The destination is decided upon in the next step, where a subset $A$ of the remaining parties $P$ performs a joint quantum operation on their qudits such that a player in $P\backslash A$ ends up with the state $\ket{\Phi}$ -- up to local operations that depend on measurement outcomes of the dealer and parties $A$. Here we want to show that open-destination teleportation can also be performed with AME states.

Assume that an \AME{n,d} state has been distributed among $n$ parties. One of the $n$ parties is assigned the role of the dealer. She performs a Bell measurement on her qudit and the secret $\ket{S} = \sum a_i \ket{i}$. This transforms the state to
\begin{equation}
       \ket{S} \ket{\Phi} \rightarrow
	\ket{\Phi_S} = \frac{1}{\sqrt{d^m}}
	\sum_{(k,i)\in \ZZ_d^m} a_{p q,i} 
	\ket{k_1}_{B_1} \cdots \ket{k_{m-1}}_{B_{m-1}} 
	\ket{\phi(k,i)}_A,
\end{equation}
where $pq$ labels the outcome of the Bell measurement and has to be made public. The remaining $n-1$ parties that share the resulting state have been divided into two sets $A$ and $B$ of size $\ceil{n/2}$ and $m-1 = \floor{n/2}-1$, respectively. Now, after the teleportation operation has been completed, the parties in set $A$ may choose one party $B_i \in B$ as the final destination for the state $\ket{S}$. Then, after performing the joint unitary operation of Equation~\eqref{eq:sortV} followed by a Bell measurement on qudits $A_i$ and $A_m$ with outcome $rs$, the party $B_i$ ends up with the state $\ket{\Phi}_{B_i} = U^\dagger_{rs} U^\dagger_{pq} \ket{S}$, which can be easily transformed to $\ket{S}$ if the measurement results $pq$ and $rs$ are known.

Note that with the parallel teleportation protocol introduced in Ref.~\cite{Helwig2012}, also one of the parties in $A$ can be chosen to receive the state $\ket{S}$. Thus, after the dealer's teleportation operation is completed, any set of size greater or equal $\ceil{n/2}$ can choose any of the remaining $n-1$ parties as the final destination of the teleportation.

\section{Swapping of AME states}
\label{section:swapping}

Entanglement swapping \cite{Yurke1992} is a very useful tool for the application of entanglement in communication. By making a Bell measurement on Bob's side, two entangled states shared between Alice and Bob, and Bob and Charlie, respectively, can be transformed into an entangled state shared by Alice and Charlie. Employing this procedure in quantum repeaters \cite{Dur1999} allows entangled states to be used for long distance communications. In this section, we show to what extent a generalization of the entanglement swapping protocol can be constructed to allow swapping of entanglement between absolutely maximally entangled states shared between different parties.

Assume that parties $\{1, 2, \ldots, 2n\}$ share an \AME{2n,d} state,
\begin{eqnarray}
\ket{\Phi}_{1,\ldots, 2n}&=&\sum \ket{i_1\cdots i_n}_{1,\ldots, n}\ket{\phi(i_1, \ldots, i_n)}_{n+1,\ldots, 2n}\\
                            &=&\sum \ket{i_1 \cdots i_n}_{1,\ldots, n}U\ket{i_1\cdots i_n}_{n+1, \ldots, 2n},
\end{eqnarray}
where $U$ is a unitary transformation with $U\ket{i_1\cdots i_n}=\ket{\phi(i_1, \ldots, i_n)}$. 

Suppose parties $\{n+1, \ldots, 3n\}$ also share an \AME{2n,d} state
\begin{eqnarray}
\ket{\Phi}_{n+1,\ldots, 3n}
                            &=&\sum \ket{i_1 \cdots i_n}_{n+1, \ldots, 2n}U\ket{i_1\cdots i_n}_{2n+1,\ldots, 3n}.
\end{eqnarray}
Now each of the parties $\{n+1, \ldots, 2n\}$ performs a Bell measurement on their qudits from both AME states. Without loss of generality, we can assume the measurement result is $(q,p)=(0,0)$ (see Equation~\eqref{eq:Bell} for the notation), since other measurement outcomes produce the same state up to local transformations. Then the state shared by the parties $\{1, \ldots, n, 2n+1, \ldots, 3n\}$ becomes
\begin{equation}
\ket{\Phi}_{1, \ldots, n, 2n+1, \ldots, 3n}=\sum \ket{i_1\cdots i_n}_{1,\ldots, n}U^2\ket{i_1\cdots i_n}_{2n+1, \ldots, 3n}
\end{equation} 
Consecutive applications of the above procedure gives the following lemma:
\begin{lemma}
\label{lemma:swapping}
Suppose each group of parties $\{1, \ldots, 2n\}$, $\{n+1, \ldots, 3n\}$, $\cdots$, $\{mn+1, \ldots, (m+1)n\}$ shares an \AME{2n,d} state, 
\begin{equation}
\ket{\Phi}=\sum \ket{i_1 \cdots i_n}U\ket{i_1\cdots i_n}.
\end{equation}
Then, if each of the parties $\{n+1, n+2, \ldots, mn\}$ performs a Bell measurement on their two qudits, the resulting state shared by the parties $\{1,\ldots, n, mn+1, \ldots, (m+1)n\}$ is locally equivalent to 
\begin{equation}
\label{eq:swappinglemma}
\ket{\Phi}_{1,\ldots, n, mn+1, \ldots, (m+1)n}=\sum \ket{i_1\cdots i_n}_{1, \ldots, n}U^m\ket{i_1\cdots i_n}_{mn+1, \ldots, (m+1)n}
\end{equation}
\end{lemma}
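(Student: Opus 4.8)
The plan is to establish Lemma~\ref{lemma:swapping} by induction on the number of shared AME states $m$, promoting the single-swap computation already displayed in the text to the base case and inductive step. The key structural fact, verified in the preamble of this section, is that one round of Bell measurements on the $n$ intermediate qudits of two adjacent \AME{2n,d} states of the form $\sum \ket{i_1\cdots i_n}U\ket{i_1\cdots i_n}$ fuses them into a single state of the same form but with $U$ replaced by $U^2$. The entire content of the lemma is that this operation composes: swapping across a chain of $m$ states, each contributing one factor of $U$, yields $U^m$ acting on the endpoint registers.

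First I would set up the induction carefully. Suppose the claim holds after swapping the first $m-1$ states, so that the parties $\{1,\ldots,n,(m-1)n+1,\ldots,mn\}$ share the state $\sum \ket{i_1\cdots i_n}_{1,\ldots,n}U^{m-1}\ket{i_1\cdots i_n}_{(m-1)n+1,\ldots,mn}$ (up to the local corrections from the intermediate measurements). The $m$-th AME state, shared by $\{mn+1-n,\ldots,(m+1)n\}$, overlaps this state precisely on the register $\{(m-1)n+1,\ldots,mn\}$. I would then perform the Bell measurement on those $n$ qudits and reuse the single-swap calculation: the factor $U^{m-1}$ from the induction hypothesis and the factor $U$ from the new state compose to $U^{m-1}\cdot U = U^m$ on the endpoints $\{1,\ldots,n\}$ and $\{mn+1,\ldots,(m+1)n\}$, which is exactly \eqref{eq:swappinglemma}.

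The main obstacle, and the step deserving the most care, is bookkeeping the local-equivalence corrections. When the Bell outcome on some intermediate qudit is $(q,p)\neq(0,0)$, the swap introduces a local unitary $U_{qp}$ rather than reproducing the state exactly; the text handles this by the remark that other outcomes give the same state up to local transformations and by taking $(q,p)=(0,0)$ without loss of generality. In the induction I would make this precise by showing that each intermediate Bell measurement contributes a correction that acts locally on the endpoint registers, so that the accumulated corrections across all $mn-n$ measured pairs remain a tensor product of single-qudit unitaries on $\{1,\ldots,n\}$ and $\{mn+1,\ldots,(m+1)n\}$. This is what justifies the phrase ``locally equivalent to'' in the statement: the composed state equals \eqref{eq:swappinglemma} after a final round of outcome-dependent local unitaries, and crucially these corrections never entangle the endpoints with the discarded middle parties, so the swapped state is again a clean \AME{2n,d} state of the stated form.

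A subtle point I would verify explicitly is that $U^m$ still defines a valid AME state, i.e.\ that the map $\ket{i_1\cdots i_n}\mapsto U^m\ket{i_1\cdots i_n}$ yields a state maximally entangled across the bipartition $\{1,\ldots,n\}$ versus $\{mn+1,\ldots,(m+1)n\}$. Since $U$ is unitary, $U^m$ is unitary, and the resulting state $\sum\ket{i_1\cdots i_n}U^m\ket{i_1\cdots i_n}$ automatically has the Schmidt form of \eqref{eq:defAMEstate} with orthonormal branches on the second register; hence it is maximally entangled for that balanced bipartition, which is all the lemma asserts. I would not need the full absolute-maximal-entanglement property here, only maximal entanglement across this one cut, so no additional appeal to \cref{def:AME} is required beyond recording that unitarity of $U^m$ guarantees orthonormality of the $\ket{\phi}$ branches.
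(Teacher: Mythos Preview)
Your inductive argument is exactly the paper's: the single-swap computation in the preamble serves as the base case ($m=2$), and the inductive step fuses the state carrying $U^m$ with one more AME copy to obtain $U^{m+1}$, just as you describe (with the cosmetic difference that the paper indexes the step $m\to m+1$ while you use $m-1\to m$). Two small remarks: your final paragraph is unnecessary, since the lemma does not claim the swapped state is AME---the text immediately following the lemma says it generally is not, and Corollary~\ref{corollary:swapping} is precisely what isolates when it is---and the paper, like your sketch, handles non-trivial Bell outcomes only via the WLOG assertion rather than by the explicit single-qudit bookkeeping you propose, so that part of your plan goes beyond what the paper actually carries out.
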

\begin{proof}[Proof by induction]
The case for $m=2$ is demonstrated in the above discussion already. If the lemma holds for $m$, for $m+1$ the two remaining states, after the parties $\{n+1, n+2, \ldots, mn\}$ performed their Bell measurements, are 
\begin{multline}
\ket{\Phi}_{1,\ldots, n, mn+1, \ldots, (m+1)n}=\\
\sum \ket{i_1\cdots i_n}_{1, \ldots, n}U^m\ket{i_1\cdots i_n}_{mn+1, \ldots, (m+1)n}
\end{multline} 
and
\begin{multline}
\ket{\Phi}_{mn+1, \ldots, (m+1)n,(m+1)n+1, \ldots, (m+2)n}=\\
\sum \ket{i_1\cdots i_n}_{mn+1, \ldots, (m+1)n}U\ket{i_1\cdots i_n}_{(m+1)n+1, \ldots, (m+2)n}.
\end{multline} 
After the parties $\{mn+1, \ldots, (m+1)n\}$ all perform a Bell measurement, the state shared by $\{1, \ldots, n, (m+1)n+1, \ldots, (m+2)n\}$ becomes
\begin{multline}
\ket{\Phi}_{1,\ldots, n, (m+1)n+1, \ldots, (m+2)n}=\\
\sum \ket{i_1\cdots i_n}_{1, \ldots, n}U^{m+1}\ket{i_1\cdots i_n}_{(m+1)n+1, \ldots, (m+2)n}.
\end{multline}
\end{proof}
The state in Equation~\eqref{eq:swappinglemma} is generally not an AME state, however, depending on the exact form of the unitary $U$, the resulting state can be absolutely maximally entangled again for certain $m$, as expressed in the following corollary.
\begin{corollary}[Swapping of AME States]
\label{corollary:swapping}
Suppose each set of parties $\{1, \ldots, 2n\}$, $\{n+1, \ldots, 3n\}$, $\cdots$, $\{mn+1, \ldots, (m+1)n\}$ shares an \AME{2n,d} state, 
\begin{equation}
\ket{\Phi}=\sum \ket{i_1 \cdots i_n}U\ket{i_1\cdots i_n}.
\end{equation}
If $U^m$ is locally unitary equivalent to $U$ up to some permutation of parties, then, by making a Bell measurement on each of the parties $(n+1, \ldots, mn)$, parties $(1, \ldots, n, mn+1, \ldots, (m+1)n)$ will share an \AME{2n,d} state.
\end{corollary}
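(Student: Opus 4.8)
The plan is to read Corollary~\ref{corollary:swapping} as a one-line consequence of Lemma~\ref{lemma:swapping} together with the fact that the AME property is stable under local unitaries and relabelling of the parties. By Lemma~\ref{lemma:swapping}, after the Bell measurements the parties $\{1,\ldots,n,mn+1,\ldots,(m+1)n\}$ hold a state that is locally equivalent to
\[
\ket{\Phi_{U^m}} = \sum \ket{i_1\cdots i_n}_{1,\ldots,n}\, U^m\ket{i_1\cdots i_n}_{mn+1,\ldots,(m+1)n},
\]
so it suffices to show that $\ket{\Phi_{U^m}}$ is an \AME{2n,d} state whenever $U^m$ is locally unitarily equivalent to $U$ up to a permutation of parties.

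First I would record the invariance that drives the argument. By \cref{def:AME}, condition (iii), a state on $2n$ qudits is AME if and only if every subset of at most $n$ parties has a totally mixed reduced density matrix, and this criterion is manifestly preserved under (a) a single-qudit unitary applied to any party, since conjugating $\Id$ by a unitary leaves it totally mixed, and (b) a permutation of the parties, since a permutation only relabels subsets without changing their cardinality. Hence the AME property survives any composition of local unitaries and party permutations; in particular the given state $\ket{\Phi_U}=\sum \ket{i_1\cdots i_n}\,U\ket{i_1\cdots i_n}$, which is \AME{2n,d} by hypothesis, remains AME under such operations.

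The core step transports the algebraic equivalence of $U$ and $U^m$ into a physical equivalence of $\ket{\Phi_U}$ and $\ket{\Phi_{U^m}}$. Writing $\ket{\Omega}=\sum_i \ket{i}\ket{i}$ for the (unnormalised) maximally entangled state across the two blocks, we have $\ket{\Phi_U}=(\Id\otimes U)\ket{\Omega}$ and $\ket{\Phi_{U^m}}=(\Id\otimes U^m)\ket{\Omega}$. The hypothesis furnishes local operators $V=\bigotimes_i v_i$, $W=\bigotimes_i w_i$ and a factor permutation $P_\pi$ with, say, $U^m = V\,P_\pi\,U\,W$. Substituting and factoring gives $\ket{\Phi_{U^m}}=(\Id\otimes V)(\Id\otimes P_\pi)(\Id\otimes U)(\Id\otimes W)\ket{\Omega}$, and I would then push the trailing $W$ across $\ket{\Omega}$ using the transpose (\emph{ricochet}) identity $(\Id\otimes W)\ket{\Omega}=(W^{T}\otimes\Id)\ket{\Omega}$ with $W^{T}=\bigotimes_i w_i^{T}$. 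Since $(\Id\otimes U)$ and $(W^{T}\otimes\Id)$ act on disjoint blocks they commute, yielding
\[
\ket{\Phi_{U^m}} = (W^{T}\otimes\Id)\,(\Id\otimes V)\,(\Id\otimes P_\pi)\,\ket{\Phi_U}.
\]
Here $W^{T}$ is a local unitary on block $\{1,\ldots,n\}$, $V$ is a local unitary on block $\{mn+1,\ldots,(m+1)n\}$, and $P_\pi$ permutes the parties of that same block. By the invariance above, $\ket{\Phi_{U^m}}$ is therefore \AME{2n,d}, and combining this with Lemma~\ref{lemma:swapping} proves the corollary.

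The main obstacle I anticipate is making the clause ``up to a permutation of parties'' precise and tracking it correctly through the ricochet identity: a permutation of the $n$ tensor factors of the index $\ket{i_1\cdots i_n}$ surfaces as a genuine permutation of the parties of one block, and one must verify that the permutation and the transposed local unitaries all land on blocks disjoint from one another, so that they constitute allowed party-local and relabelling operations rather than entangling ones. Once this bookkeeping of which operator acts on $\{1,\ldots,n\}$ versus $\{mn+1,\ldots,(m+1)n\}$ is settled, the remainder is routine.
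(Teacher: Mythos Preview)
Your proof is correct and matches the paper's (implicit) approach: the paper states the corollary without a separate proof, treating it as an immediate consequence of Lemma~\ref{lemma:swapping} together with the hypothesis on $U^m$, and you have simply made explicit the passage from $\ket{\Phi_{U^m}}$ to an \AME{2n,d} state via the ricochet identity and the invariance of the AME property under local unitaries and party permutations.
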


In the following we will show an example for AME swapping, the swapping of an \AME{4,3} state. As an application we will show that different from the EPR state, the \AME{4,3} state 
\begin{equation}
\begin{split}
\label{eq:swapping43state}
\ket{\Phi} &= \ket{0000}+ \ket{0111}+ \ket{0222}\\
                         &+ \ket{1012}+ \ket{1120} + \ket{1201}\\
                         &+ \ket{2021}+ \ket{2102} + \ket{2210}
\end{split}
\end{equation}
needs two steps of Bell measurements for the swapping to reproduce an \AME{4,3} state.

\begin{example}
Assume we have three \AME{4,3} states, shared by the players $\{A, B, C, D\}$, $\{C, D, E, F\}$, and $\{E, F, G, H\}$, respectively. After $C$, $D$, $E$, and $F$ all perform a Bell measurement on their two qutrits, the parties $\{A, B, G, H\}$ will share an \AME{4,3} state.  This is illustrated in Figure~\ref{fig:swapping}.
\end{example}

\begin{figure}
	\centering
	\includegraphics[width=.5\textwidth]{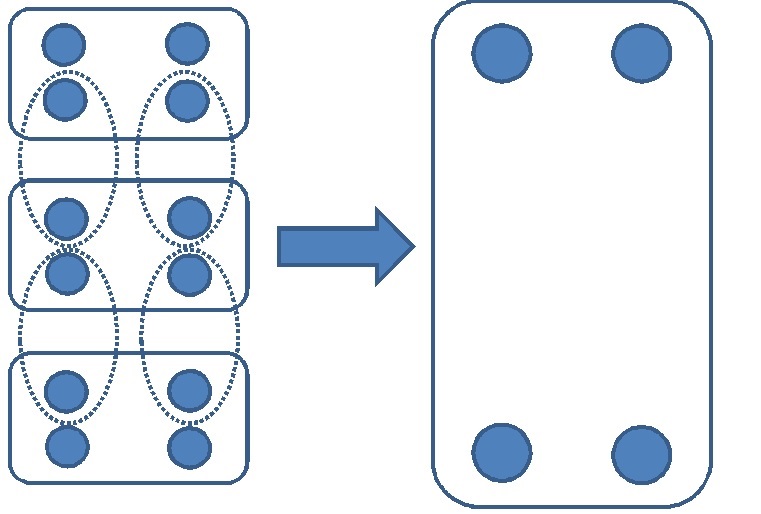}
	\caption{Entanglement swapping between three \AME{4,3} states results in a new \AME{4,3} state between previously unentangled parties. Dotted circles indicate where Bell measurements have to be performed.}
	\label{fig:swapping}
\end{figure}

\begin{proof}[Calculation]
From Equation~\eqref{eq:swapping43state} we can see that the unitary $U$ of Lemma~\ref{lemma:swapping} is given by
\begin{equation}
\begin{split}
U\ket{00}=\ket{00} \qquad U\ket{01}=\ket{11} \qquad U\ket{02}=\ket{22}\\
U\ket{10}=\ket{12} \qquad U\ket{11}=\ket{20} \qquad U\ket{12}=\ket{01}\\
U\ket{20}=\ket{21} \qquad U\ket{21}=\ket{02} \qquad U\ket{22}=\ket{10}
\end{split}
\end{equation}
Applying this unitary twice gives
\begin{equation}
\begin{split}
U^2\ket{00}=\ket{00} \qquad U^2\ket{01}=\ket{20} \qquad U^2\ket{02}=\ket{10}\\
U^2\ket{10}=\ket{01} \qquad U^2\ket{11}=\ket{21} \qquad U^2\ket{12}=\ket{01}\\
U^2\ket{20}=\ket{21} \qquad U^2\ket{21}=\ket{02} \qquad U^2\ket{22}=\ket{10}
\end{split}
\end{equation}
It can be easily seen that by permuting the two parties and exerting a unitary transformation that exchange $\ket{1}$ and $\ket{2}$ in the second party, this unitary transformation becomes the identity. Thus $U^3$ is locally unitary equivalent to $U$ up to permutation of parties, and together with Corollary~\ref{corollary:swapping}, it is easily to see that after the Bell measurement, the resulting state would be locally unitary equivalent with an \AME{4,3} state up to a permutation of party $G$ and $H$. Since the \AME{4,3} state satisfies permutation symmetry, which means by permuting any two parties the resulting state is still an \AME{4,3} state, $A, B, G, H$ really share the same \AME{4,3} state that was swapped. 
\noqed
\end{proof}

\begin{remark}
In the above example, we require that party $C$ acts as the third party of the first AME state and the first party of the second AME state. This is actually not required. Since the \AME{4,3} state is permutationally invariant, we only need $C$ to posses any qudit of each AME state. The same requirement applies for $D$. In fact, most of the AME states we found are permutational invariant, and in these cases we do not need to have restriction on which specific qudits the parties control. 
\end{remark}

\section{Conclusion}
In this paper, we have shown the existence of AME states for a wide range of parameters, in particular, the derivation of AME states from classical MDS codes proves that AME states exist for any number of parties if the system dimension is chosen large enough. We have proven an equivalence between AME states and threshold quantum secret sharing schemes. By extending the idea of how threshold QSS schemes follow from the entanglement properties of AME states, we have shown that a wider class, namely ramp QSS schemes can be constructed from AME states. The entanglement requirements to construct ramp QSS schemes are over-satisfied by AME states, and we prove the necessary and sufficient entanglement conditions for the construction of ramp QSS schemes.

Two more applications for AME states have been given in form of open-destination teleportation and entanglement swapping of AME states. The quantum secret sharing and teleportation scenarios that have been presented here and in Ref.~\cite{Helwig2012}, indicate that AME states can be used for a wide variety of quantum information protocols that involve the displacement of quantum states. 

\section*{Acknowledgements}
We acknowledge financial support by NSERC and the CRC program. We also want to thank Hoi-Kwong Lo, Jos\'e Ignacio Latorre, Arnau Riera, and David Gosset for helpful discussions and comments.

\appendix

\section{Entanglement in Ramp QSS Schemes}
\label{appendix:ramp}
Here, we give a generalization of the methods used in Theorem~\ref{theorem:AME-QSS} to prove the equivalence of AME states and threshold QSS schemes to $(m,L,2m-L)$ ramp QSS schemes for arbitrary $L$. The generalization is very straightforward, the secret dimension is now $d^L$ instead of $d$, changing also the dimension of the reference system to $d^L$. We define an isometry $U_S$ that encodes the $d^L$ dimensional secret $S$ into a state shared by the $2m-L$ players, each holding a $d$ dimensional system, 
\begin{equation}
	\label{EntQECC:eq:rampencoding}
	U_S: \HH_S \rightarrow \HH_1 \otimes \dots \otimes \HH_{2m-L},
\end{equation}
where $\HH_i \cong \CC^d$ and $\HH_S \cong {\CC^{d^L}}$.

We further introduce a reference system $\HH_R \cong \HH_S$ and consider the state $\ket{\Phi}$ that is generated by applying the encoding operation to $\HH_S$ for a maximally entangled state $\ket{RS} = 1/\sqrt{d} \sum_i \ket{i}\ket{i} \in \HH_R \otimes \HH_S$, i.e., $\ket{\Phi} = \Id_R \otimes U_S \ket{RS}$. A set of players $A \subset P$ shares the state $\rho_{RA} = \Tr_{P\backslash A} \ket{\Phi}$ with the reference system. $A$ is authorized, if there exists a completely positive map $T_A: \HH_A \rightarrow \HH_S$ such that \cite{Imai2005, Schumacher1996}
\begin{equation}
	\label{EntQECC:eq:rampRA_recover}
	\Id_R \otimes T_A (\rho_{RA}) = \ket{RS}.
\end{equation}
For the mutual information between an authorized set (i.e., $|A| \geq m$) and the reference system is
\begin{equation}
    \label{EntQECC:eq:rampQSSconditionA}
	I(R:A) = I(R:S) = 2S(S) \quad \text{if } |A| \geq m,
\end{equation}
and for a forbidden set, we must have
\begin{equation}
    \label{EntQECC:eq:rampQSSconditionB}
	I(R:B) = 0 \quad \text{if } |B| \leq m-L.
\end{equation}
$U_S$ defines a $(m,L,2m-L)$ ramp QSS scheme if and only if these two equations are satisified.

Since any set of players $C \subset P$ with $|C| = L$ can change some forbidden set into an authorized set, we have  $S(C) \geq S(S)$ \cite{Imai2005} for all sets with $L$ players. And because $S(S)$ is maximal and equal to $S(R)$,
\begin{equation}
	\label{EntQECC:eq:rampsecretentropy}
	S(S) = S(R) = S(C) = L \log d.
\end{equation}
Equations~\eqref{EntQECC:eq:rampQSSconditionA} and \eqref{EntQECC:eq:rampQSSconditionB} can be rewritten to give
\begin{align}
	\label{EntQECC:eq:entropyRA}
	S(R,A) &= S(A) - S(R) \quad \text{if } |A| \geq m\\
	\label{EntQECC:eq:entropyRB}
	S(R,B) &= S(B) + S(R) \quad \text{if } |B| \leq m-L.
\end{align}
This sums up the changes in the lead-up to Theorem~\ref{theorem:AME-QSS}, whose version we may now state and prove for ramp QSS schemes. For this we regard the reference system of dimension $d^L$ as consisting of $L$ systems, each of dimension $d$, so that $\ket{\Phi}$ is a state shared between $2m$ parties, $2m-L$ players that share the secret and $L$ in the reference system, each possessing a qudit.

\newtheorem*{thm:ramp}{Theorem~\ref{EntQECC:theorem:AME-ramp}}
\begin{thm:ramp}
For a state $\ket{\Phi} \in \HH_P \otimes \HH_R$, shared between $2m-L$ players $P$, each holding a qudit, and $L$ reference qudits, the following two properties are equivalent:
\begin{enumerate}[(i)]
	\item $\ket{\Phi}$ is maximally entangled for any bipartition for which the $L$ reference qudits are in the same set.
	\item $\ket{\Phi}$ is the purification of a $(m,L,2m-L)$ ramp QSS schemes. The encoded secret of the ramp QSS scheme has dimension $d^L$, and each share has dimension $d$.
\end{enumerate}
\end{thm:ramp}
\begin{proof}
$(i) \rightarrow (ii)$:
In the equations for the mutual information, all occurring sets, $A$, $B$, $R$, $A \cup R$ and $B \cup R$, are maximally entangled with the rest because for all of them all reference qudits are in the same set of the bipartition. Hence we have $S(A) = (2m-|A|) \log d$, $S(B) = |B| \log d$, $S(R) = S(S) = L \log d$, $S(A,R) = (2m-|A|-L) \log d$ and $S(A,B) = (|B| + L) \log d$. Plugging these into Equations~\eqref{EntQECC:eq:rampQSSconditionA} and \eqref{EntQECC:eq:rampQSSconditionB} while using the definition of the mutual information (Equation~\ref{eq:mutualinfo}), confirms that these are satisfied.

$(ii)\rightarrow (i)$:
Consider an unauthorized set of players $B$, with $|B| = m-L$. Then the set is $B \cup C$ is authorized for any additional set $C$ with $|C| = L$ and $C \cap B = \emptyset$. From Equation~\eqref{EntQECC:eq:entropyRA} we have
\begin{equation}
 S(B,C,R) = S(B,C) - S(R)
\end{equation}
On the other hand, using the Araki-Lieb inequality \cite{Nielsen2000} $S(X,Y) \geq S(X) - S(Y)$ and Equation~\eqref{EntQECC:eq:entropyRB} gives
\begin{equation}
 S(B,C,R) \geq S(B,R) - S(C) = S(B) + S(R) - S(C).
\end{equation}
Combining the last two equations and using $S(S)=S(R)=S(C)$ shows
\begin{equation}
 S(B,C) \geq S(B) + S(C),
\end{equation}
where equality must hold due to the subadditivity of the entropy $S(X,Y) \leq S(X) + S(Y)$. This means that the entropy increases maximally when adding $L$ shares to $m-L$ shares. The strong subadditivity of the entropy \cite{Nielsen2000}
\begin{equation}
 S(X,Y) - S(Y) \geq S(X,Y,Z) - S(Y,Z)
\end{equation}
states that adding system $X$ to system $Y$ increases the entropy at least by as much as adding system $X$ to a larger system $Y \cup Z$ that contains $Y$. So in our case, adding $L$ shares to less than $m-L$ shares increases the entropy by at least $S(C)$, and since this is the maximum, it increases the entropy exactly by $S(C)$. Moving the shares over one by one from $C$ to $m-L$ or less shares must increase the entropy maximally with each share for it to be maximally increased when all shares are added. Hence adding one share to a set that contains less than $m$ shares increases the entropy maximally. Hence, starting out with a set of no shares, and repeatedly adding one share to the set until the set contains any $m$ shares and is authorized, shows that any set of $m$ shares has entropy $m \log d$. This shows that the entropy is maximal for any subset of $m$ players, i.e., $\ket{\Phi}$ is maximally entangled for any bipartition into $m$ players $A$ and its complement $P\backslash A \cup R$, which 
contains all $L$ reference qudits, and thus is maximally entangled for any bipartition where all reference qudits are in the same set.
\end{proof}

\bibliographystyle{ieeetr}
\bibliography{entanglement}

\end{document}